\documentclass{article}
\usepackage{graphicx}
\usepackage[numbers,sort&compress]{natbib}
\usepackage{amsmath,amsthm}
\usepackage{amssymb}
\usepackage{authblk}
\newtheorem{theorem}{Theorem}
\def\Im{{\rm Im}}
\def\Re{{\rm Re}}
\date{ }

\title{Construction of  complex potentials for multiply connected domains}

\author[1]{P.~N.~Ivanshin}
\affil[1]{Lobachevskiy Institute of Mathematics \& Mechanics, Kazan Federal
University, Kremlevskaya st., 35, Kazan, 420008, Russia.
E-mail: pivanshi@yandex.ru}

\begin{document}
\maketitle

\begin{abstract}

The method of reduction of a Fredholm integral equation to the linear
system is generalized to construction of a complex potential --- an analytic
function in an infinite multiply connected domain with a simple pole at
infinity which maps the domain onto a plane with horizontal slits. We
consider a locally sourceless, locally irrotational
flow on an arbitrary given $n$-connected infinite domain with impermeable boundary. The complex potential has the form of a Cauchy integral with one linear and $n$ logarithmic summands. The method is easily computable.

Keywords: potential flow; multiply connected domain; Fredholm integral equation.

\end{abstract}

\section*{ Introduction}

Conformal mappings by the analytical functions of complex variable play an
important part in solution of many problems of mechanics and mathematics,
particularly in the case of plane potential fields and Laplace equation solution \cite{1}. It is known that a complex potential of a plane locally sourceless, locally irrotational steady flow in any 
$n$-connected infinite domain may have only logarithmic summands to the analytic function with the given simple pole at infinity \cite{BPR}. Here we present the method of construction of a complex potential for this flow in any multiply connected domain with impermeable boundary through approximate solution of the Fredholm equation. We give the following formulation of the problem: given the velocity at the infinity and the circulations around
the contours we are able to find the analytic function with constant imaginary parts 
at the domain boundary and the given simple pole at infinity. We prove convergence of the method and give certain
clarifying examples of its application. We base our construction of the complex
potential in an infinite multiply-connected domain on the approximation of the analytic mapping of our domain on the domain with horizontal slits with the given simple pole at   infinity.

The solution of the integral equations in our method is reduced to the
solution of an infinite linear system. We obtain an approximate mapping
by solving the finite system with a truncated matrix. 

  The method computational complexity equals $O(M^3)$, here $M$ is the order of the corresponding system. 
%


\section*{Complex potential construction  for the flow around  several contours by means of Cauchy integral}

Consider an infinite $n$-connected domain $D_z$ bounded by the simple smooth curves $L_s$  given by the equations
$$
L_s = \{z=z_s(t),\  z_s(0)=z_s(2\pi),\  t\in [0,2\pi]\},\  s =1, \ldots, n.
$$

We also assume that the boundary curves $L_s$ complex representations are as
follows:
$$
z_s(t)= \sum\limits_{k=-m_s}^{n_s}d_{k s} e^{i k t},\ t\in [0,2\pi], \ s=1, \ldots, n.
$$
The parametrization traces  the contours $L_s$, $s = 1, \ldots, n$, counterclockwise.

Note that the flow may include circulations around certain connected components of the domain. 

%

\begin{theorem}  

A complex potential  $f(z)$ of the flow in any $n$-connected domain $D_{z}$ can be approximately constucted by reduction to solution of a linear  system. 
\end{theorem}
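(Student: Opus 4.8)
The idea is to represent $f$ as a Cauchy-type integral over the boundary together with the finitely many elementary singular terms forced by the data, to turn the impermeability (streamline) condition into a Fredholm integral equation of the second kind, and then to pass to its Fourier realization --- an infinite linear system whose finite truncation is the computable approximation.

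The first step fixes the form of the complex potential. A single-valued function analytic in $D_z$ with a simple pole at infinity is a linear term plus a bounded analytic function, and the latter is, up to an additive constant, a Cauchy integral over $L=\bigcup_{s=1}^{n}L_s$; adjoining the logarithmic singularities admitted by \cite{BPR} we write
\[
f(z)=\alpha z+\sum_{s=1}^{n}\beta_s\ln(z-\zeta_s)+\frac{1}{2\pi i}\int_{L}\frac{\omega(\sigma)\,d\sigma}{\sigma-z},
\]
where the $\zeta_s$ are fixed points inside the $s$-th bounded component, $\alpha=\overline{v_\infty}$ is determined by the prescribed velocity at infinity, each $\beta_s$ is purely imaginary and fixed by the prescribed circulation around $L_s$ (so that $\Re f$ carries exactly the admissible multivaluedness while $\Im f$ stays single-valued), and $\omega=i\,h$ with $h$ a real unknown density on $L$.

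The second step imposes impermeability: each $L_s$ is a streamline, $\Im f(z)=c_s$ for $z\in L_s$, with the real constants $c_s$ themselves unknown. Parametrizing $\sigma=z_s(\tau)$ and $z=z_r(t)$ and applying the Sokhotski--Plemelj formula to the Cauchy integral, then taking imaginary parts, produces a relation of the form
\[
\tfrac12\,h\bigl(z_r(t)\bigr)+\sum_{s=1}^{n}\int_{0}^{2\pi}K_{rs}(t,\tau)\,h\bigl(z_s(\tau)\bigr)\,d\tau=c_r-g_r(t),\qquad r=1,\dots,n,
\]
where $g_r$ gathers the (known) imaginary parts of the linear and logarithmic summands on $L_r$. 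On a smooth curve the imaginary part of the Cauchy kernel is the double-layer kernel $\partial\ln|\sigma-z|/\partial n_\sigma$, which extends continuously across $\sigma=z$ (to the curvature); since the $z_s(t)$ are trigonometric polynomials every $K_{rs}$ is in fact real-analytic in $(t,\tau)$, so the integral operator is compact and the system above is genuinely of the second kind. To make the number of scalar unknowns match the number of scalar equations one augments it with $n$ supplementary scalar conditions (for instance prescribing the means of the $h|_{L_s}$, one of which fixes the overall additive constant in $f$); the $c_r$ then furnish exactly the parameters needed to meet the finitely many solvability conditions of the Fredholm operator on the multiply connected domain. Unique solvability follows from the maximum principle: a nontrivial solution of the homogeneous problem would give a single-valued harmonic function (a stream function), bounded at infinity and constant on every boundary component, hence constant, whence $h\equiv 0$ and all $c_r$ equal.

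The final step expands each $h(z_s(\cdot))$ in a trigonometric series, substitutes it into the integral equation, and uses the explicitly computable Fourier expansions of the analytic kernels $K_{rs}$; this yields an infinite linear system $(I+\mathcal K)x=b$ on $\ell^2$ in which $\mathcal K$ is Hilbert--Schmidt (its matrix entries decay geometrically because the curves are analytic), together with the $n$ scalar side conditions. Since $\mathcal K$ is compact and the coordinate truncations $P_M$ converge strongly to the identity, $P_M\mathcal K P_M\to\mathcal K$ in operator norm, so $I+\mathcal K$ --- invertible by the previous step --- has invertible truncations $I+P_M\mathcal K P_M$ for all large $M$, with uniformly bounded inverses; hence the solutions of the finite $M\times M$ systems converge to $x$, at a rate inherited from the geometric decay of the matrix, and substituting the truncated density back into the integral representation gives an approximate $f$ converging uniformly on compact subsets of $D_z$. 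Solving one truncated system by Gaussian elimination costs $O(M^3)$, in agreement with the stated complexity. I expect the genuine difficulties to be, first, getting the bookkeeping of the $n$ supplementary conditions against the $n$ unknown constants $c_r$ exactly right so that the Fredholm system is square and non-degenerate, and second, the uniqueness argument for the multiply connected configuration (in particular verifying that the homogeneous integral equation, with its side conditions, has only the trivial solution); the passage to the linear system and its convergence, by contrast, are then fairly mechanical consequences of compactness and analyticity.
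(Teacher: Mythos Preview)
Your plan is sound and reaches the same destination as the paper --- a Fredholm system of the second kind, its Fourier realization as an infinite linear system, and convergence of the truncations by compactness --- but several technical choices differ. You represent the bounded analytic part as a Cauchy integral with a purely imaginary \emph{density} $\omega=ih$ and work with the classical double-layer (Neumann) kernel; the paper instead writes $\psi$ via its own boundary values through the Sokhotski--Plemelj relation (its equation~(1)), so the unknown is the real part $p_s(t)=\Re\psi(z_s(t))$ of the boundary trace rather than an auxiliary density, and the kernel is $-[\arg(z_\sigma(\tau)-z_s(t))]'_t$ (the adjoint double-layer kernel). You keep the streamline constants $c_r$ as unknowns balanced by $n$ supplementary scalar conditions; the paper instead \emph{differentiates} the real-part relation in $t$ to eliminate those constants outright --- obtaining a system for $p_s'(t)$ whose right-hand side involves the known $q_s'(t)$ through a singular kernel handled by the Hilbert formula --- and only afterward recovers the additive constants $q_{0s}$ from a separate $(n-1)\times(n-1)$ linear system, using that the Cauchy integral of the boundary data must vanish at chosen interior points $z_k^*$. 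Your uniqueness argument via the maximum principle is cleaner than what the paper supplies (it largely defers to earlier references), and your convergence argument through the Hilbert--Schmidt property and projection approximation is equivalent in spirit to the paper's explicit block decomposition combined with Jackson's inequality. The paper's differentiation trick buys a main system that is already square without side conditions; your formulation buys a single pass without a post-processing step and a more transparent Fredholm alternative, at the cost of exactly the bookkeeping you flag as the first genuine difficulty.
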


\begin{proof}   

Assume that $v \in \mathbb{C}$ is the velocity of the flow at  infinity.
Consider the complex potential in the form of the function $\phi(z)=\overline{v} z+\sum\limits_{s=1}^{n} c_s \log (z-z^*_{s})+\psi(z)$ \cite{BPR}. Here $z^*_{s}$ is a point inside the domain bounded by the contour $L_s$, $c_s=\frac{\Gamma_s}{2 \pi i}$, $\Gamma_s$ is the circulation around the  contour $L_s$, $s=1, \ldots, n$,  \cite{BPR},  and $\psi(z)$ is the unknown analytical in $D_z$ function. 

Any contour $L_s$, $s=1, \ldots, n$, is  a part of a flow line for this complex potential, so the imaginary part of the function $\phi(z)$ is constant on these contours.


  According to \cite{6} the necessary and sufficient
condition for  $\psi(t)$, to be analytic in $D_z$ are the boundary relations
\begin{equation}\label{1}
\psi(z_s(t)) =\sum\limits_{\sigma=1}^n \frac{1}{\pi i} \int\limits_0^{2\pi}
\psi(z_{\sigma}(\tau))  [\log(z_{\sigma}(\tau)-z_s(t)) ]'_\tau d\tau.
\end{equation}
where $ t\in[0, 2\pi]$, $ s=1, \ldots, n$.

We introduce the new functions $p_s(t)$, $q_s(t)$ as follows: $\psi(z_s(t))=p_s(t)+iq_s(t)$,  $s =1, \ldots, n$. 
Note that 
\begin{equation}\label{2}
q_s(t)=-\Im[\overline{v}z_s(t)+ \sum\limits_{\sigma=1}^{n} c_{\sigma} \log (z_s(t)-z^*_{\sigma})]+C_s, 
\end{equation}
 here $C_s$ is constant for any $s=1, \ldots, n$.

We separate the real part of both sides of equation (\ref{1}):
\begin{eqnarray}\label{3}
p_s(t)=\sum\limits_{\sigma=1}^n \frac{1}{\pi} \int\limits_{0}^{2\pi} p_{\sigma}(\tau)[\arg(z_{\sigma}(\tau)-z_s(t)) ]'_\tau  d\tau +\\
 +\sum\limits_{\sigma=1}^n \frac{1}{\pi}
\int\limits_{0}^{2\pi} q_{\sigma}(\tau)[\log|z_{\sigma}(\tau)-z_s(t)| ]'_\tau  d\tau . 
\end{eqnarray}


After differentiating  relation  (\ref{2}) with respect to $t$ and integrating the results by parts, we
obtain the following relations on the functions $p_s'(t)$:
\begin{equation}\label{4}
p_s'(t)=\sum\limits_{\sigma=1}^n\frac{1}{\pi} \int\limits_{0}^{2\pi} p_{\sigma}'(\tau) K_{\sigma, s}(\tau,t)  d\tau+ Q_s(t), 
\end{equation}
where
$$
 K_{\sigma s}(\tau,t) = - [\arg(z_{\sigma}(\tau)-z_s(t)) ]'_t,\ \ L_{j,s}(\tau,t)=[\log|z_j(\tau)-z_s(t)| ]'_t,
 $$
 $$ 
Q_s(t)=  \sum\limits_{\sigma=1}^n \frac{1}{\pi}
\int\limits_{0}^{2\pi} q'_{\sigma}(\tau) L_{\sigma,s}(\tau,t) d\tau, 
$$
here by formula (2) $q'_{s}(t)=-\Im[\overline{v}z_s'(t)+ \sum\limits_{\sigma=1}^{n} \frac{c_{\sigma}z_s'(t)}{ z_s(t)-z^*_{\sigma}}]$, $s=1, \ldots, n$.

The kernel $L_{\sigma,s}$  has a singularity in the form of $\cot \frac{\tau-t}{2}$
for $\sigma = s$:
$$
(\log|z_s(\tau)-z_s(t)| )'_t =\Re \left(\log \sum\limits_{k=-m_s}^{n_s} d_{k s}[e^{i k\tau}-e^{i kt}]\right)'_t=\Re\left(\log \sin\frac{\tau-t}{2}+\right.
$$
$$
\left.+\log\left[ \sum\limits_{k=1}^{n_s} d_{k s}e^{ikt}  \sum\limits_{l=0}^{k-1} e^{i l (\tau-t)}- \sum\limits_{k=1}^{m_s} d_{(-k) s}e^{-ik\tau}  \sum\limits_{l=1}^{k-1} e^{i l (\tau-t)}\right]\right)_t'=
$$
$$
=-\frac{1}{2}\cot \frac{\tau - t}{2} + \left(\log\left|\sum\limits_{k=1}^{n_s} d_{k s}e^{ikt}  \sum\limits_{l=0}^{k-1} e^{i l (\tau-t)}- \sum\limits_{k=1}^{m_s} d_{(-k) s}e^{-ik\tau}  \sum\limits_{l=1}^{k-1} e^{i l (\tau-t)}\right|\right)_t'.
$$

The Cauchy principal value integral
$$
\frac{1}{\pi}
\int\limits_{0}^{2\pi} [\log |z_{\sigma}(\tau)|]' \cot\frac{\tau-t}{2} d\tau
$$
can be calculated via Hilbert formula \cite{9} as in \cite{8}.

Finally we obtain the following system of Fredholm integral equations of the second kind which can be written in the operator form as follows:
$$
\left(\begin{array}{ccccccc}
I- K_{1,1}& - K_{2,1} & \ldots &- K_{n,1} \\
- K_{1,2} & I- K_{2,2} & \ldots &- K_{n,2} \\
\ldots & \ldots & \ldots & \ldots \\
- K_{1,n} &- K_{2,n} & \ldots & I- K_{n,n} 
\end{array}\right) \left(\begin{array}{c}p_1'\\
\vdots \\
p_n'\end{array}\right) =
$$
\begin{eqnarray} \label{eqb}
=\left(\begin{array}{c}Q_{1}  \\
\vdots\\
Q_{n}\end{array}\right).
\end{eqnarray}

The last operator system can be reduced to the infinite linear system over the Fourier coefficients of the unknown functions $p_s'(t)$, $s =1, \ldots, n$, if we find the coefficients of double Fourier expansions of  the kernels of integral operators and compare the coefficients with the same trigonometric functions.   Approximate solution of the infinite system over Fourier coefficients of the unknown functions is a solution of a truncated system over the Fourier coefficients of the unknown functions.

Existence of the exact solution of system (\ref{4})  and convergence of the approximate
solution to the exact one provided $M \to \infty$ were proved in \cite{7} for the case of conformal mapping of a simply connected domain. This proof can be
applied to the case of multiply connected domain if we replace the corresponding
space $l^2$ by the space $ l^2 \times l^2 \times \cdots \times l^2$.

We search for the approximate solution of system (\ref{4}) in the form of Fourier polynomials:
\begin{equation}\label{5}
p_s'(t)= \sum\limits_{l=1}^{M} \alpha_{l s} \cos lt+ \beta_{l s} \sin lt, \  \ s=1, \ldots, n. 
\end{equation}

Now integral Fredholm equations of the second kind in (\ref{4})  can be reduced to
the linear system over Fourier coefficients $\alpha_{l s}$ and $\beta_{l s}$, $s=1, \ldots, n$:

$$
\left( \begin{matrix}A_{11} & B_{11}& A_{12}& \ldots& B_{1 n}\\ 
\vdots &  \vdots &  \vdots &  \ddots & \vdots  \\
C_{n 1} & D_{n 1}& C_{n 2}&\ldots& D_{n n} \end{matrix} \right) \times
$$
$$
\times \left( \begin{matrix} \alpha_1 \\ \beta_1\\  \alpha_2 \\ \beta_2\\ \vdots\\ \beta_n \end{matrix} \right) =
\left( \begin{matrix} a_1 \\ b_1\\  a_2 \\ b_2\\ \vdots \\ b_n \end{matrix} \right),
$$
where $\alpha_s =(\alpha_{1 s},\ldots,\alpha_{n s})^T$, $\beta_s =(\beta_{1 s},\ldots,\beta_{n s})^T$. The vectors   $a_s =(a_{1 s},\ldots,a_{n s})^T$, $b_s =(b_{1 s},\ldots,b_{n s})^T$ on the right-hand side of the system consist of the elements
$$
a_{j s}=\frac{1}{\pi}
\int\limits_{0}^{2\pi} Q_s(t)\cos jt dt,\  b_{j s}=\frac{1}{\pi}
\int\limits_{0}^{2\pi} Q_s(t)\sin jt dt,\  \ j=1, \ldots, n,\ s=1, \ldots, n,
$$

The block matrices $A_{\sigma s}$, $B_{\sigma s}$, $C_{\sigma s}$, $D_{\sigma s}$,   $\sigma, s=1, \ldots, n$,  of size $ M\times M$ consist of the elements
$$
A_{\sigma s j k} = \delta_{\sigma s} \delta_{j k} - \frac{1}{\pi^2}\int\limits_{0}^{2\pi}\cos k\tau d\tau  \int\limits_{0}^{2\pi}K_{\sigma s}(\tau, t) \cos jt dt, 
$$
$$
B_{\sigma s j k} =  - \frac{1}{\pi^2}\int\limits_{0}^{2\pi}\sin k\tau d\tau  \int\limits_{0}^{2\pi}K_{\sigma s}(\tau, t) \cos jt dt, 
$$
$$
C_{\sigma s j k} =  - \frac{1}{\pi^2}\int\limits_{0}^{2\pi}\cos k\tau d\tau  \int\limits_{0}^{2\pi}K_{\sigma s}(\tau, t) \sin jt dt, 
$$
$$
D_{\sigma s j k} = \delta_{\sigma s} \delta_{j k} - \frac{1}{\pi^2}\int\limits_{0}^{2\pi}\sin k\tau d\tau  \int\limits_{0}^{2\pi}K_{\sigma s}(\tau, t) \sin jt dt,
$$
where $ j, k=1, \ldots, n$, $\delta_{r t}$  is the Kronecker delta function.

 Now this  system can be reduced to the infinite equation system over the unknown coefficients $\alpha_{k,j}$, $\beta_{k,j}$, $j=0,\ldots,n$, $k=1,\ldots,\infty$, in the equivalent  form 

 \begin{eqnarray}\label{11steq}
\tilde{Y}=\tilde{P} \tilde{Y}+\tilde{Q},
\end{eqnarray}
  where $\tilde{Y}=(\alpha_{1,0},\alpha_{1,1},\ldots,\alpha_{1,n},\beta_{1,0},\beta_{1,1},\ldots,\beta_{1,n},\alpha_{2,0},\ldots)\in l_2$, the infinite matrix $\tilde{P}$ consists of the elements  $\frac{1}{\pi^2}\int\limits_0^{2\pi}f(m t) dt \int\limits_0^{2\pi}g(p \tau) (\arg[z_k(\tau)-z_j(t)])'_{\tau} d\tau$ or $\frac{1}{\pi^2}\int\limits_0^{2\pi}f(m t) dt \int\limits_0^{2\pi}g(p \tau) (\ln|z_k(\tau)-z_j(t)|)'_{\tau} d\tau$ with $f(x)$,  $g(x)$ equal to $\cos x$ or $\sin x$. These elements being the Fourier coefficients of double Fourier series of   $(\mathrm{arg}(z_k(\tau)-z_j(t)))'_{\tau}$ or $ \ln|z_k(\tau)-z_j(t)|'_{\tau}$,  $\tilde{Q}$ being the sequence of the corresponding Fourier coefficients of the functions $q_j(t)$, $j=1,\ldots,n$.  

We need to construct the approximate solution $\tilde{y}_j(t)$ of equation system  (\ref{11steq})
in the trigonometric polynomial form $\tilde{y}_j(t)=\sum\limits_{k=1}^{M} \alpha_{k,j} \cos k t+\beta_{k,j} \sin k t$ in order to apply truncated linear system as in \cite{15}. So we have to find the vector $\tilde{Y}_{F}$ with zero coordinates starting with the $(2 M (n+1) +1)$-th one  which approximates the infinite vector $\tilde{Y}$. Further on we identify the vector-function $Y$, the integral operator $P$, the vector-function $Q$ with the sequence $\tilde{Y}$, the infinite matrix $\tilde{P}$ and the sequence $\tilde{Q}$, respectively. 

Evidently the kernels $(\mathrm{arg}(z_j(\tau)-z_k(t)))'_{\tau}$ of system (\ref{eqb}) are infinitely differentiable for $k\not=j$. Due to Cauchy theorem we have  $\mathrm{arg}(z_j(\tau)-z_j(t))=$ $\mathrm{arg}z'(t+\theta(\tau-t))$ with $\theta \in (0,1)$, so this function is well-defined for  $t, \tau \in [0,2\pi]$ and $\lim\limits_{\tau\to t} \mathrm{arg}(z_j(\tau)-z_j(t))'_{\tau}=$ $k(t) |z'(t)|/2$ where $k(t)$ is the curvature of the boundary curve at the corresponding point. It can be easily verified that the kernel  $(\mathrm{arg}(z_j(\tau)-z_j(t)))'_{\tau}$  is  at least twice differentiable with respect to both variables. So the  double complex Fourier coefficients of $(\mathrm{arg}(z_k(\tau)-z_j(t)))'_{\tau}$  have the following estimates: $|c_{k,j, l, t}|<\frac{U}{|l|^2 |t|^2}$.


For $F=(n+1) 2 M$  integral equation system (\ref{eqb})    reduces to infinite linear system  (\ref{11steq}) which can be presented as follows:
$$
\left( \begin{matrix}I_F-P_F & S \\ R & I_{\infty}-V \end{matrix} \right)
\left( \begin{matrix} Y_1 \\Y_2 \end{matrix} \right)  =
\left( \begin{matrix}Q_1  \\  Q_2 \end{matrix} \right).
$$

 Here  $P_F$ is an $F\times F$ block matrix $P=\left( \begin{matrix} P_{1,1} & \ldots & P_{1,n} \\
 \ldots & \ldots& \ldots \\
P_{n,1} & \ldots & P_{n,n} \end{matrix} \right)$, $M \times M$ matrices $P_{j,k}$ correspond to integral summands of (\ref{eqb}), $j,k=0, \ldots, n$,  $S$ is an  $F\times \infty$ matrix, $R$ is an $\infty\times F$ matrix, $V$ is an $\infty\times \infty$ matrix, $I_F$ and $I_{\infty}$ are the identity matrices of  relative sizes. Each of the vectors $Q_1$ and $Y_1$ has $F$ coordinates, the vectors $Q_2$ and $Y_2$ have the infinite number of coordinates. The Fourier coefficients of the smooth functions tend to zero as their  numbers tend to infinity, so  the coefficients of the matrices $S, R$ and $V$ together with the coordinates of $Q_2$ decrease rapidly as $F\to \infty$. Due to Theorem assumptions and the Fourier coefficients speed of convergence to zero  the matrix norm of $V$ and the vector norm of $Y_2$ tend to zero as $F\to \infty$.
 
Let us prove that there exists the number $T\in \mathbb{N}$ such that the matrix operator $I_F-P_F$ is invertible  $\forall F>T$ since the limit for $P_F$ integral operator $P$  is compact  and the operator $I-P$ is invertible due to the lemma assumption. Note that we do not distinguish a finitely dimensional vector and the Fourier polynomial with the corresponding finite set of coordinates in our proof.  Recall first that due to chapter VI, paragraph 1 of \cite{LS}  $\|P-P_F\|  \to 0$ if $F\to \infty$. The operator norm that we deal with here is the usual operator norm for the Hilbert space mappings. Let us assume that  $\forall T \in \mathbb{N}$ 
there exists $s_l>T$ such that the spectrum of $P_{s_l}$ contains $1$.   Then there exists an infinite sequence  $( {v}_{s_l})_{l \in \mathbb{N}} \subset L^2$ such that  $\| {v}_{s_l}\|=1$ and $P_{s_l}  {v}_{s_l}= {v}_{s_l}$.  Let us prove that  then there should exist at least one limit point for the sequence $\{ {v}_{s_l} \}_{l \in \mathbb{N}}$. Since the operator $P$ is compact there exist both a subsequence $\{ {v}_{s_{k_j}}\}_{j \in \mathbb{N}}$ and an element $ {w}_0 \in L^2$ so that $P  {v}_{s_{k_j}} \to  {w}_0$, $(j \to \infty)$. Then $\|P_{s_{k_j}}  {v}_{s_{k_j}} - {w}_0\|=\|P_{s_{k_j}}  {v}_{s_{k_j}}-P  {v}_{s_{k_j}}+P  {v}_{s_{k_j}} - {w}_0\| \leq \|P_{s_{k_j}}  {v}_{s_{k_j}}-P  {v}_{s_{k_j}}\|+\|P  {v}_{s_{k_j}} - {w}_0\| \leq \|P_{s_{k_j}} -P \| +\|P  {v}_{s_{k_j}} - {w}_0\|  \to 0$, $(j \to \infty)$. Thus $\| {v}_{s_{k_j}}-{w}_0\|=\|P_{s_{k_j}} {v}_{s_{k_j}}- {w}_0\| \to 0$, $(j \to \infty)$. Hence $ {v}_{s_{k_j}} \to{w}_0$, $(j \to \infty)$. Note that since  $\| {v}_{s_{k_j}}\|=1$, $\forall j \in \mathbb{N}$, the element $ {w}_0$ is nondegenerate. Let us show now that   the relation $P  {w}_{0}= {w}_{0}$  holds true. Indeed, we have $\|P {w}_0- {w}_0 \| =\|P {w}_0+P  {v}_{s_{k_j}}-P  {v}_{s_{k_j}}- {w}_0 \| \leq \|P\| \| {w}_0-{v}_{s_{k_j}}\|+\|P  {v}_{s_{k_j}}+P_{s_{k_j}}  {v}_{s_{k_j}}-P_{s_{k_j}}  {v}_{s_{k_j}}- {w}_0\| \leq \| P\| \| {w}_0-{v}_{s_{k_j}}\|+\|P_{s_{k_j}}  {v}_{s_{k_j}}-P  {v}_{s_{k_j}}\|+\| {w}_0-P_{s_{k_j}}  {v}_{s_{k_j}}\| \leq \|P\| \| {w}_0+ {v}_{s_{k_j}}\| +\|P -P_{s_{k_j}}\|+\| {w}_0- {v}_{s_{k_j}}\| \to 0$, $(j \to \infty)$.   Hence the spectrum of $P$ contains $1$. A contradiction with one of the assumptions.

We now take the number $F$ so that  $\|V\|<1$ and the matrix $I_F-P_F$ possesses the inverse one.    
               Now we have the relation
$$
(I_F-P_F)[I_F-(I_F-P_F)^{-1}S(I_{\infty}-V)^{-1}R]Y_1=Q_1-S(I_{\infty}-V)^{-1}Q_2.
$$

Obviously one can choose the value of $F$ so large that   $\|S(I_{\infty}-V)^{-1}R\|=O(1/F^2)\leq r$ where $r<1$ is arbitrary small.  Now we  estimate  the norm of the difference between the solution $x_1$  and the solution $\tilde{x}_1$ of the truncated system $(I_F-P_F)\tilde{x}_1=y_1$:
$$
\|Y_1-\tilde{Y}_1\| \leq \frac{1}{1-r} \|(I_F-P_F)^{-1}\| \|S(I_{\infty}-V)^{-1}\| \|Q_2\|+\frac{r}{1-r}  \|(I_F-P_F)^{-1}\| \|Q_1\|.
$$

Consider the first summand on the right-hand  side of the last inequality.  Recall the Jackson's inequality: if $ {\displaystyle f:[0,2\pi ]\to \mathbb{C} }$ is an $ r$ times differentiable periodic function such that
${\displaystyle |f^{(r)}(x)|\leq 1,\quad 0\leq x\leq 2\pi ,}$ then, for every positive integer $n$, there exists a trigonometric polynomial${\displaystyle T_{n-1}}$ of degree at most ${\displaystyle n-1}$ such that  $|f(x)-T_{n-1}(x)| \leq \frac{ C(r)}{n^r}$ for any $x \in [0, 2 \pi]$ where ${\displaystyle C(r)}$ depends only on $r$ \cite{Ah}. So the  vector norm of $y_2$ can be estimated by this inequality  by $K/F^2$.
The second summand also behaves no better than $O(1/F^2)$. So the error due to the series tail is  $O(1/F^2)$.

The functions $\psi_s(t)$, $s = 1, \ldots, n$, can be restored via the derivatives  $p'_s(t)$ given by formula (\ref{5}) and the functions $q_s(t)$ with
 arbitrary constant summands $q_{0 s}$
$$
\psi_s(t)=\tilde{p}_s(t)+i q_s(t)+q_{0 s}, \ \ \tilde{p}_s(t)=\sum\limits_{l=1}^{M} \frac{\alpha_{l s}}{l} \sin lt- \frac{\beta_{l s}}{l} \cos lt, \, s=1, \ldots, n.
$$

We choose the constant summand   $q_{01}=0$. 

We obtain the values of the  other constant summands $q_{0 s}$, $s=2, \ldots, n$,    in the following way. We take $n-1$ points  $z_k^{*}$ in  $n-1$ finite components  bounded by $L_k$, $k=2, \ldots, n$. The functions $\psi_s(t)$, $s=1, \ldots, n$, are the boundary values of the analytical in $D_z$ function, so the Cauchy integral with the corresponding density along the boundary of $D_z$ vanishes at the points $z_k^{*}$, $k=2, \ldots, n$. Therefore we have the linear complex system 
$$
\sum\limits_{s=1}^n \frac{1}{2\pi i} \int\limits_0^{2\pi}
\left[\tilde{p}_s(\tau)+i q_s(\tau)+q_{0 s}\right] [\log(z_{s}(\tau)-z^{*}_{k}) ]'_\tau d\tau=0, \, k=2, \ldots, n.
$$
over the unknown complex numbers $q_{0 s}$, $s=2, \ldots, n$. 



Now we have the functions $\psi_s(t)$,  $s=1, \ldots, n$, $t\in [0,2\pi]$,  and therefore we can restore the complex potential. 

The approximate analytical potential  now has the form of the Cauchy integral 
$$
\phi(z)=\sum\limits_{s=1}^n \frac{1}{2\pi i} \int\limits_{0}^{2 \pi i}
 \frac{(\tilde{p}_s(\theta)+i q_s(\theta)+q_{0 s})z'_s(\theta)}{ z_s(\theta)-z}  d\theta+\overline{v} z+\sum\limits_{s=1}^{n} c_s\log (z-z^*_{s}) .
$$

\end{proof}

If the given contour $L_s$, $s\in \{1, \ldots, n\}$, possesses a cusp then the flow is steady if its branching point for $L_s$ coincides with the cusp.  In order to put the branching points of the flow at the cusps of the given contours so that the velocity there vanishes we have to modify the circulations $\Gamma_k$ around the  contours $L_k$, $k=1, \ldots, n$.

\section*{3. Examples}
1. Consider two elliptic holes with the boundaries $\exp \left(\frac{1}{4} (-i) \pi \right) (3 \exp (i t)-\exp (-i t))+10 i$ and $2 \exp (-i t)+4 \exp (i t)$, $t \in [0, 2 \pi]$, with velocity at infinity equal to $1 + 0.1 i$ (Fig. 1). The circulations $\Gamma_1=\Gamma_2=0$. We give the flow lines on Fig. 1 (a). Fig. 1 (b) shows the absolute values  of velocity. The minimal  velocity values happen at the flow branching points.
  \begin{figure}[h]\label{fig0}
   \begin{center}
   \includegraphics[width=3truecm,height=3truecm]{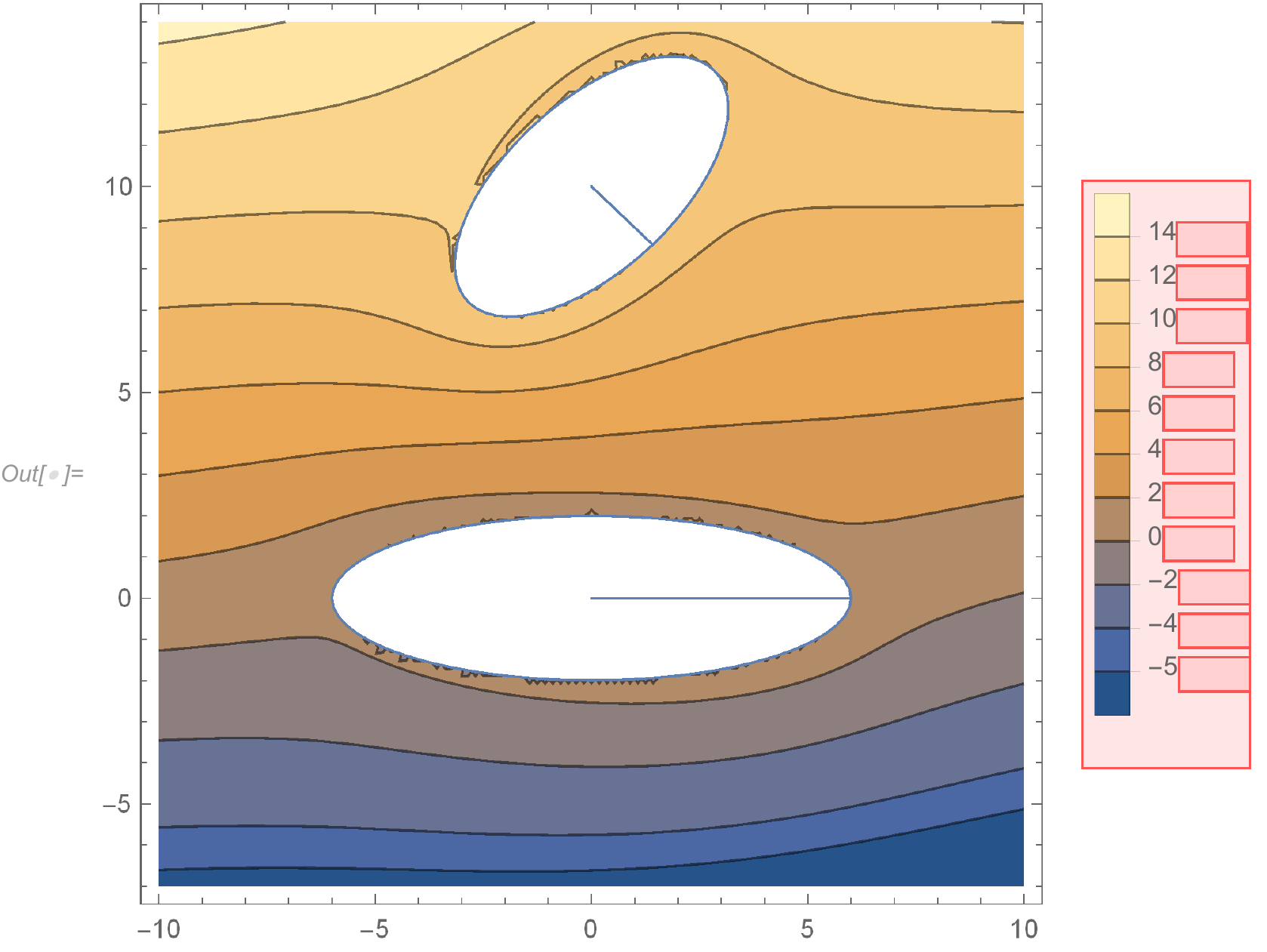}
\includegraphics[width=3truecm,height=3truecm]{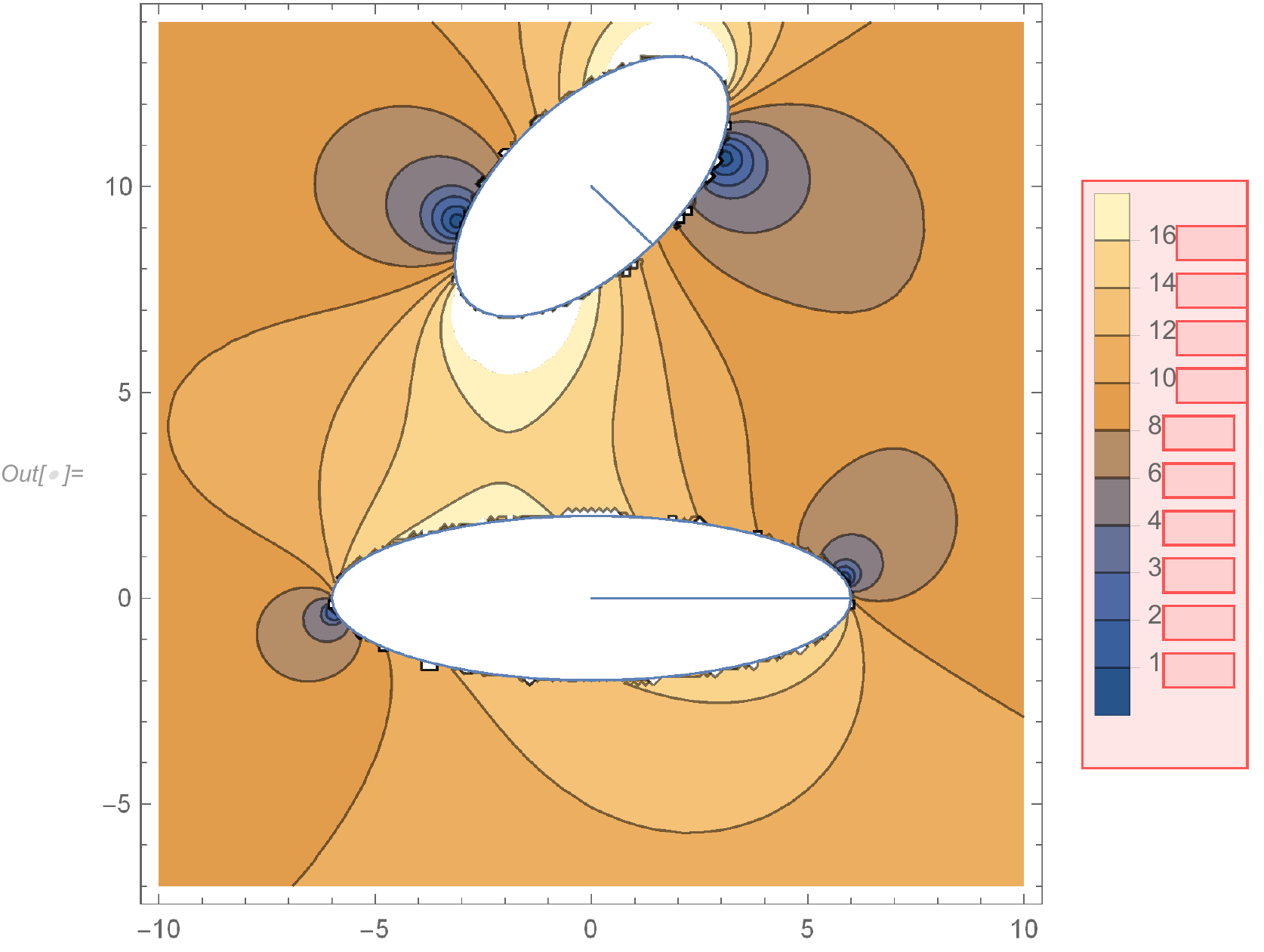}
\caption{The flow around two elliptic holes without circulations. }
  \end{center}
  \end{figure}
  
2. Consider two elliptic holes bounded by $\exp \left(\frac{1}{4} (-i) \pi \right) (3 \exp (i t)-\exp (-i t))+10 i$ and $2 \exp (-i t)+4 \exp (i t)$, $t \in [0, 2 \pi]$, with velocity at infinity equal to $1 + 0.1 i$. In order to shift the branching points of the flow at the contours we introduce the circulations $-0.4 \pi$ and $-1.2 \pi$ around the lower and the upper  contours, respectively.  We give the flow lines on Fig. 2 (a). Fig. 2 (b) shows the absolute values  of velocity.  The flow branching points on the lower contour shift downwards from that of item 1.
  \begin{figure}[h]\label{fig01}
   \begin{center}
   \includegraphics[width=3truecm,height=3truecm]{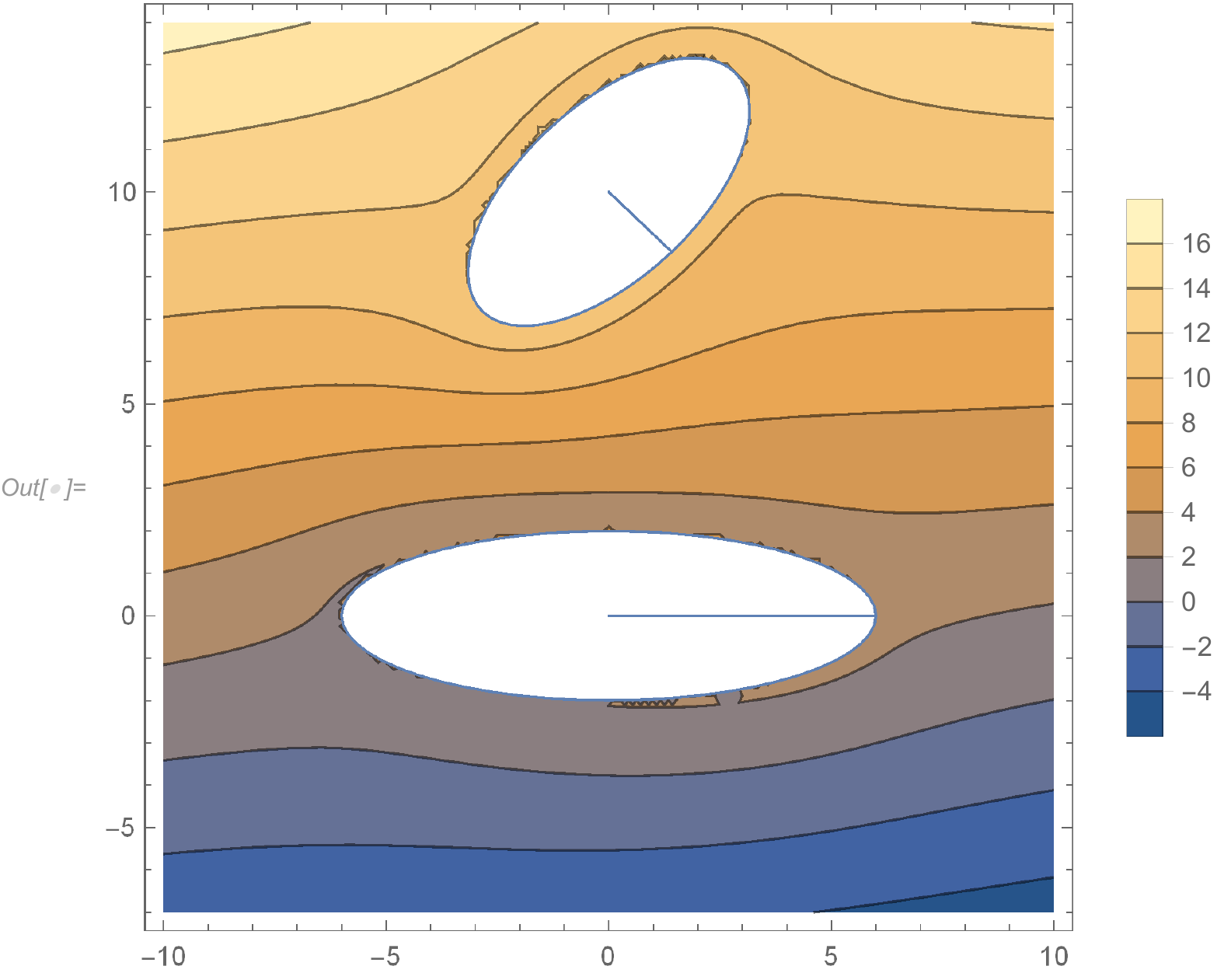}
\includegraphics[width=3truecm,height=3truecm]{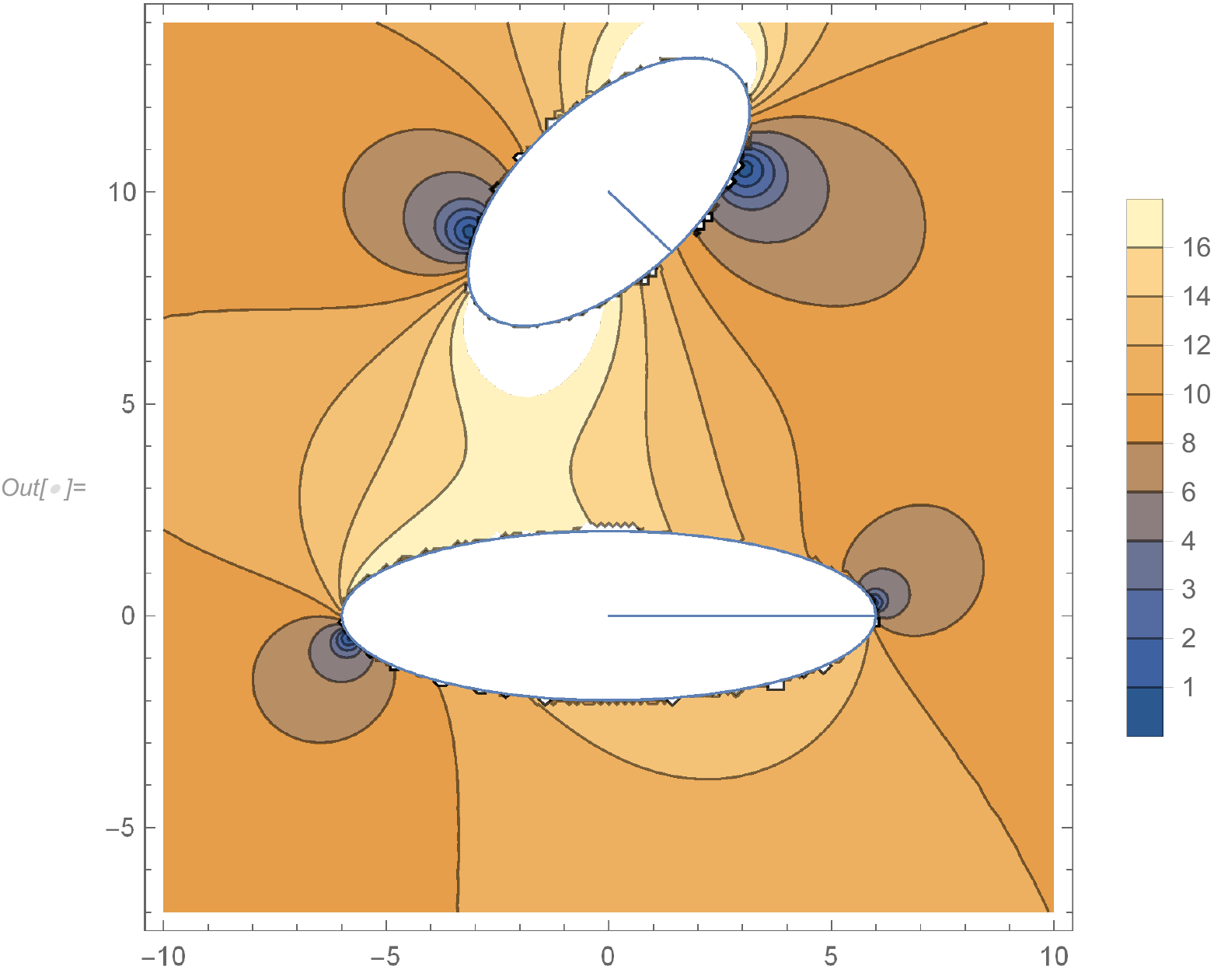}
\caption{The flow around two elliptic holes with  circulations $-0.4 \pi$ and $-1.2 \pi$.}
  \end{center}
  \end{figure}
  
 3. Consider two elliptic holes bounded by $\exp \left(\frac{1}{4} (-i) \pi \right) (3 \exp (i t)-\exp (-i t))+10 i$ and $2 \exp (-i t)+4 \exp (i t)$, $t \in [0, 2 \pi]$, with velocity at infinity equal to $1 + 0.1 i$. In order to shift the branching points of the flow at the contours we introduce the circulations $0.4 \pi$ and $-1.2 \pi$  around the lower and the upper contours, respectively.  We give the flow lines on Fig. 3 (a). Fig. 3 (b) shows the absolute values  of velocity.  The difference in placement of the flow branching points compared to the previous example is clear at the upper contour.
  \begin{figure}[h]\label{fig02}
   \begin{center}
   \includegraphics[width=3truecm,height=3truecm]{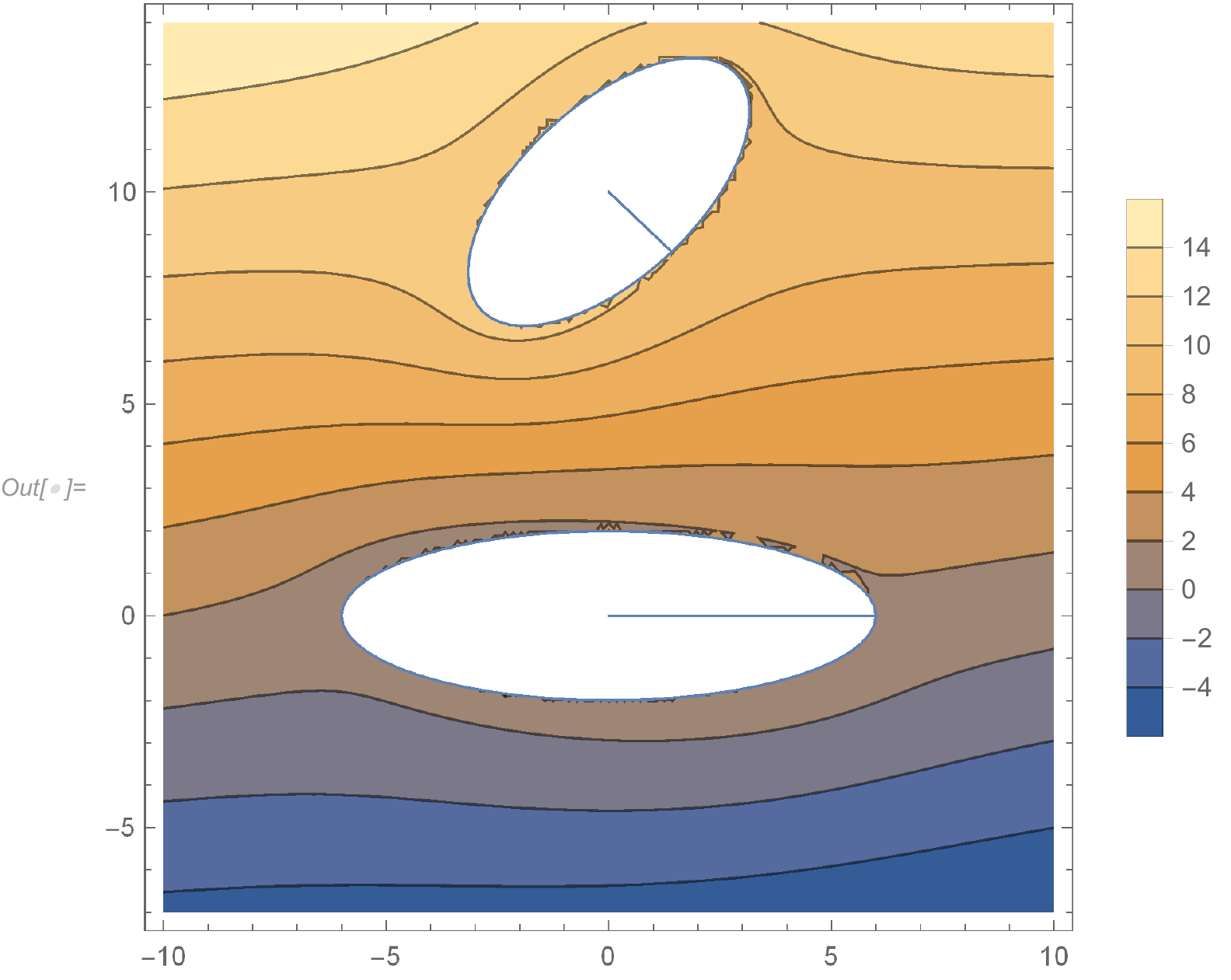}
\includegraphics[width=3truecm,height=3truecm]{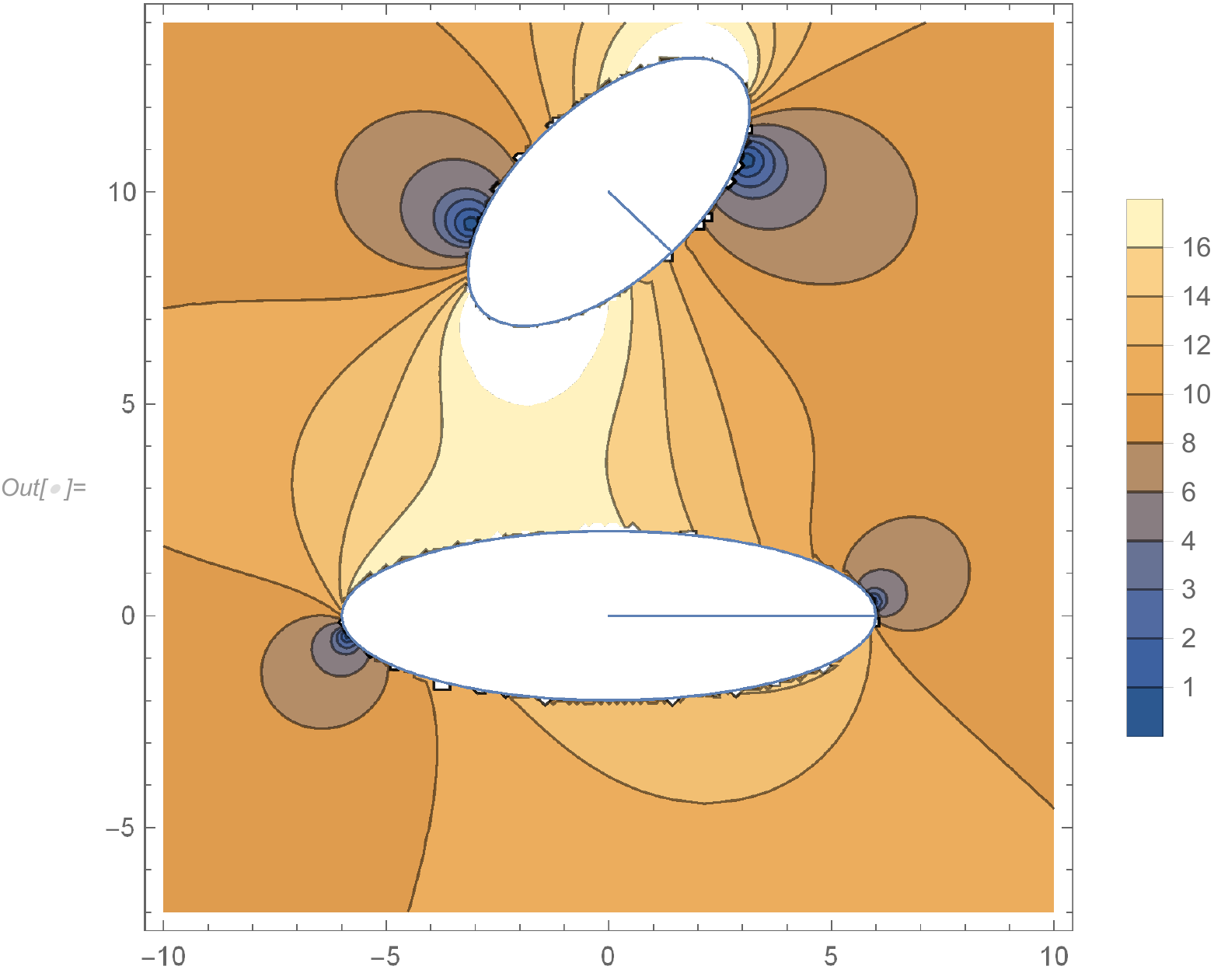}
\caption{The flow around two elliptic holes with  circulations $0.4 \pi$ and $-1.2 \pi$.}
  \end{center}
  \end{figure}
%
%
%
%

4. Consider two noncircular  domains with the cusp points $2\left( 0.15 e^{2 i t}+1.6 e^{i t}+\left( 0.3 i+0.8\right)  e^{-i t}-0.25 i {e}^{-2 i t}\right)$ and  $ 0.15 e^{2 i t}+1.6 e^{i t}+\left( 0.3 i+0.8\right)  e^{-i t}-0.25 i {e}^{-2 i t} +10 i$, $t \in [0, 2 \pi]$. We construct the flow with velocity at infinity equal to $1 + 0.1 i$.   We give the flow lines on Fig. 4 (a). Fig. 4 (b) shows the absolute values  of velocity. 
    \begin{figure}[h]\label{fig1}
   \begin{center}
   \includegraphics[width=3truecm,height=3truecm]{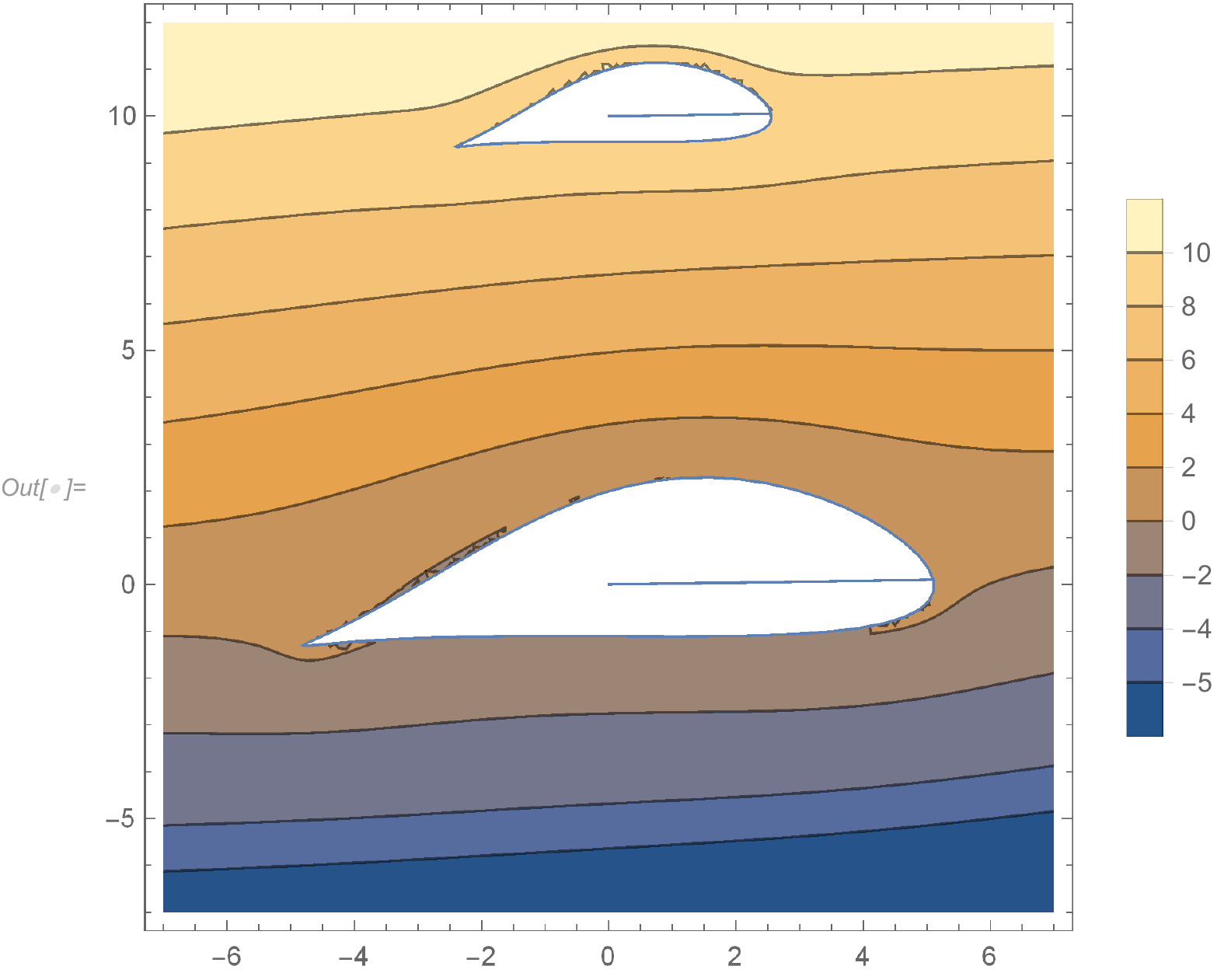}
\includegraphics[width=3truecm,height=3truecm]{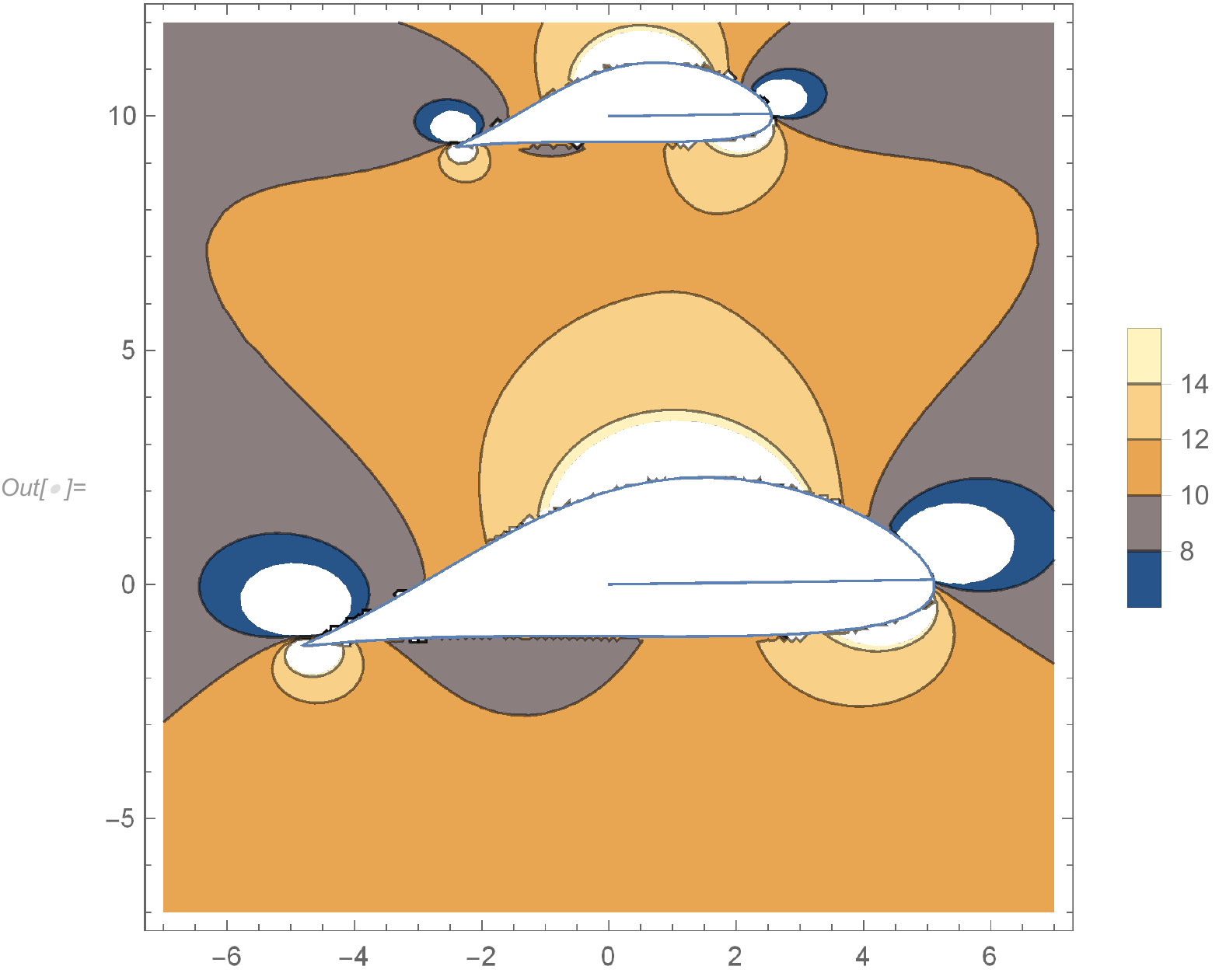}
\caption{Flow  around two domains  with cusps.}
  \end{center}
  \end{figure}

5. Consider two noncircular domains with the cusp points $2\left( 0.15 e^{2 i t}+1.6 e^{i t}+\left( 0.3 i+0.8\right)  e^{-i t}-0.25 i {e}^{-2 i t}\right)$ and  $\left( \frac{i}{2}+\frac{\sqrt{3}}{2}\right)  \left( 0.15 e^{2 i t}+1.6 e^{i t}+\left( 0.3 i+0.8\right)  e^{-i t}-0.25 i {e}^{-2 i t}\right) +10 i$, $t \in [0, 2 \pi]$. We put the flow velocity at infinity equal to $1 + 0.1 i$.  We give the flow lines on Fig. 5 (a). Fig. 5 (b) shows the absolute values  of velocity. 

      \begin{figure}[h]\label{fig2}
   \begin{center}
   \includegraphics[width=3truecm,height=3truecm]{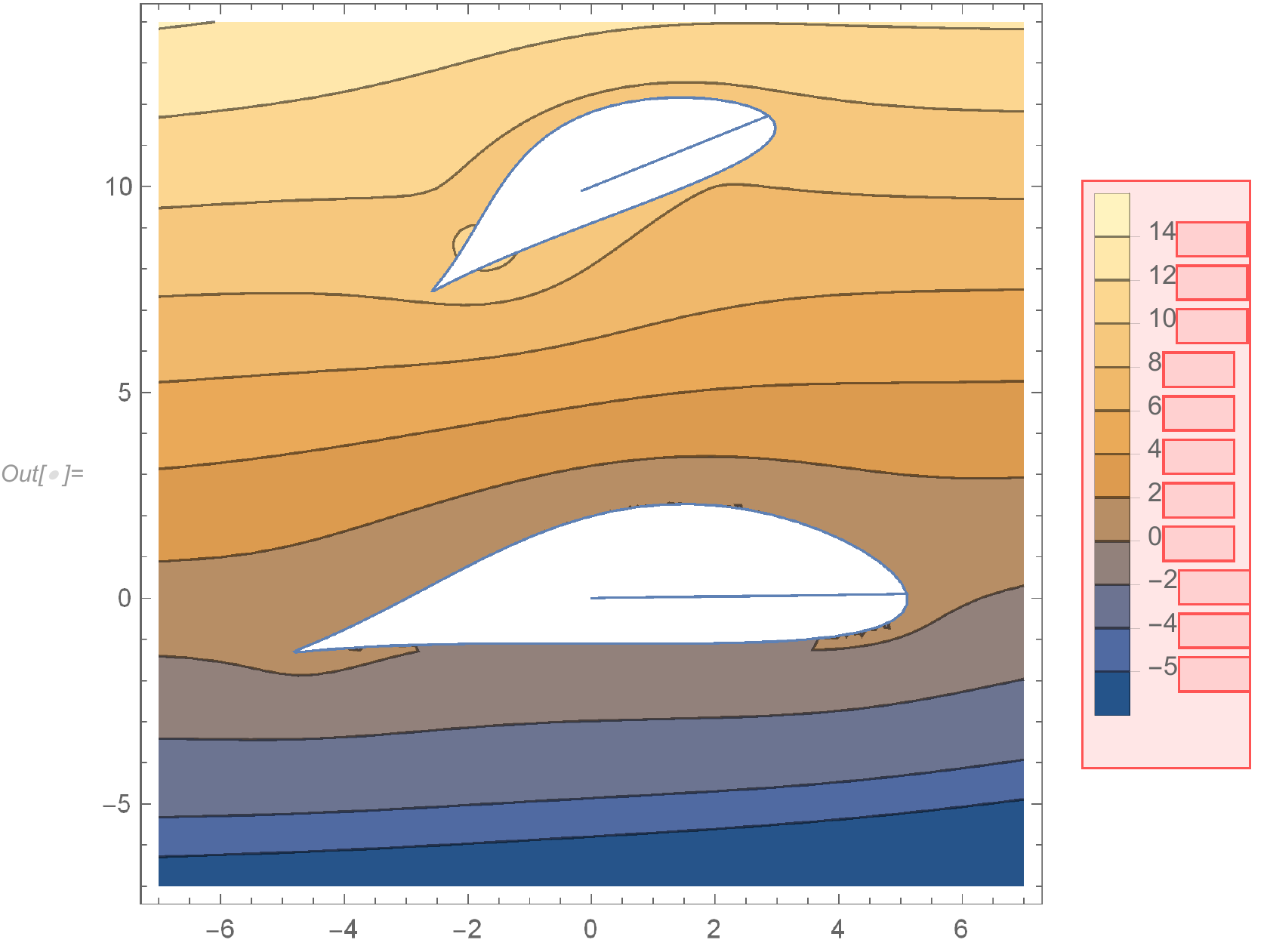}
\includegraphics[width=3truecm,height=3truecm]{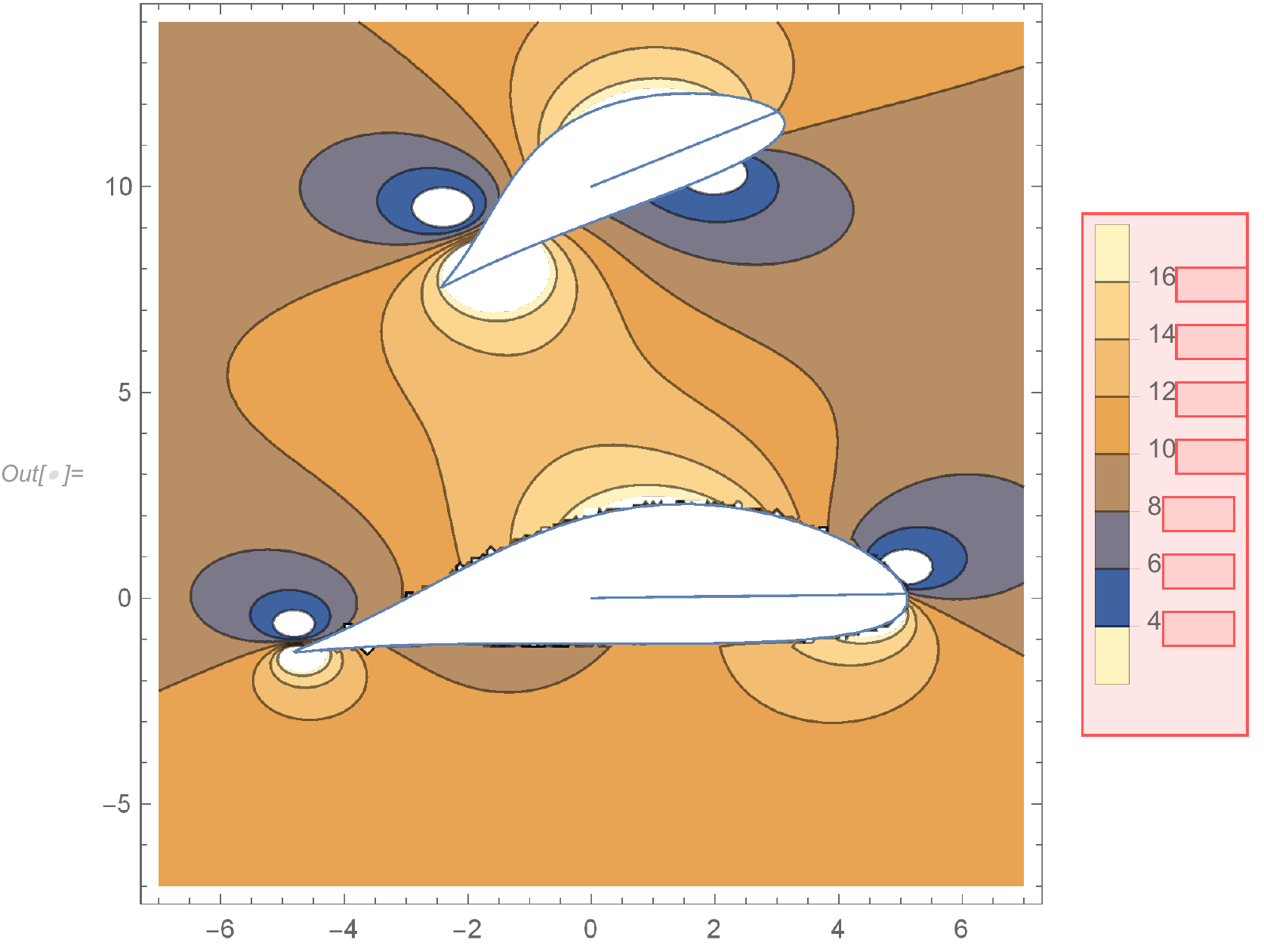}
\caption{Flow  around two domains  with cusps.}
  \end{center}
  \end{figure}

Consider different possible mutual positioning of the holes from Example 5: $2\left( 0.15 e^{2 i t}+1.6 e^{i t}+\left( 0.3 i+0.8\right)  e^{-i t}-0.25 i {e}^{-2 i t}\right)$ and  $\left( \frac{i}{2}+\frac{\sqrt{3}}{2}\right)  \left( 0.15 e^{2 i t}+1.6 e^{i t}+\left( 0.3 i+0.8\right)  e^{-i t}-0.25 i {e}^{-2 i t}\right) +5 i$. Again  we put the flow velocity at infinity equal to $1 + 0.1 i$. We give the flow lines on Fig. 6 (a). Fig. 6 (b) shows the absolute values  of velocity. Note the large high velocity zone between the contours. 
       \begin{figure}[h]\label{fig3}
   \begin{center}
   \includegraphics[width=3truecm,height=2.4truecm]{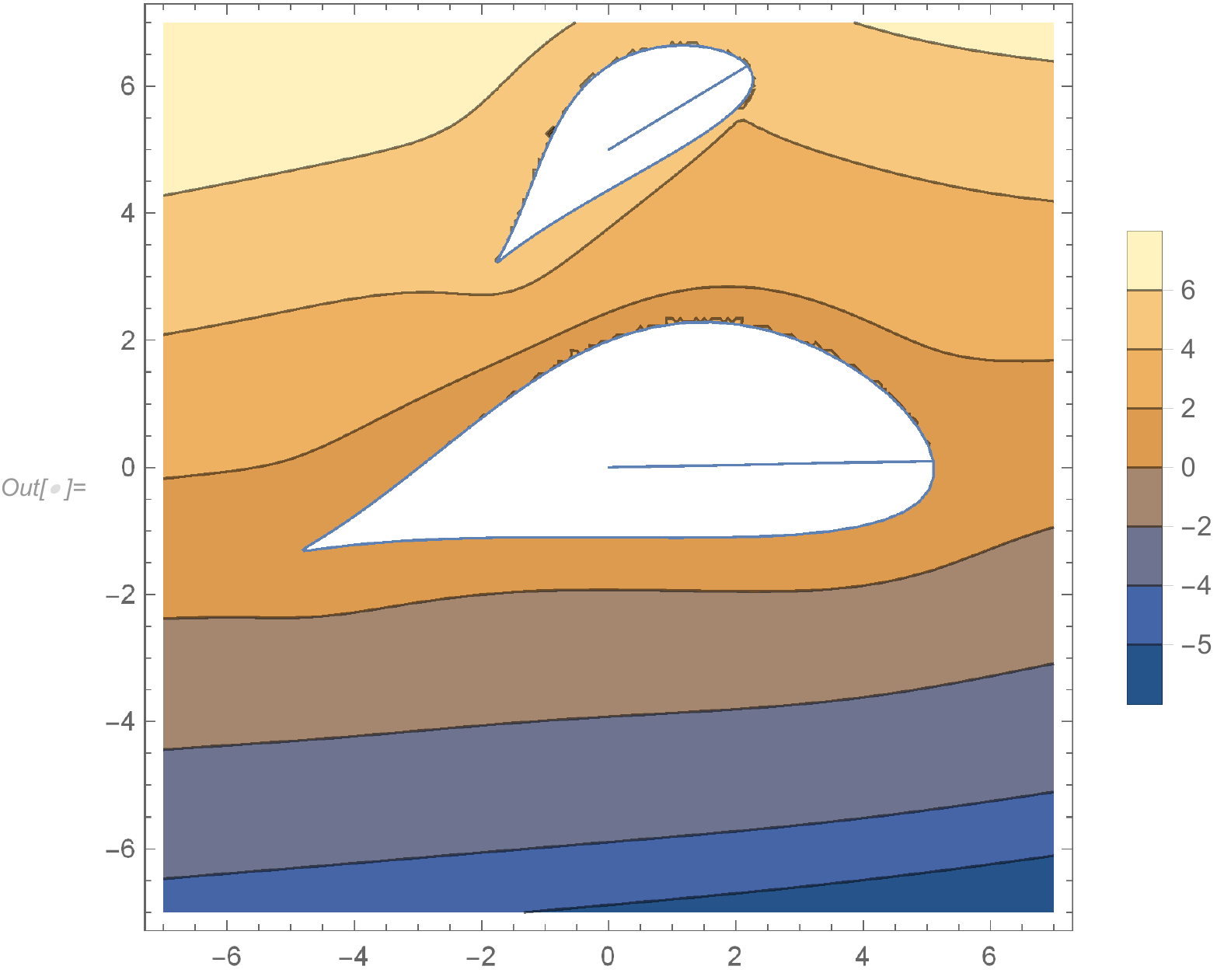}
\includegraphics[width=3truecm,height=2.4truecm]{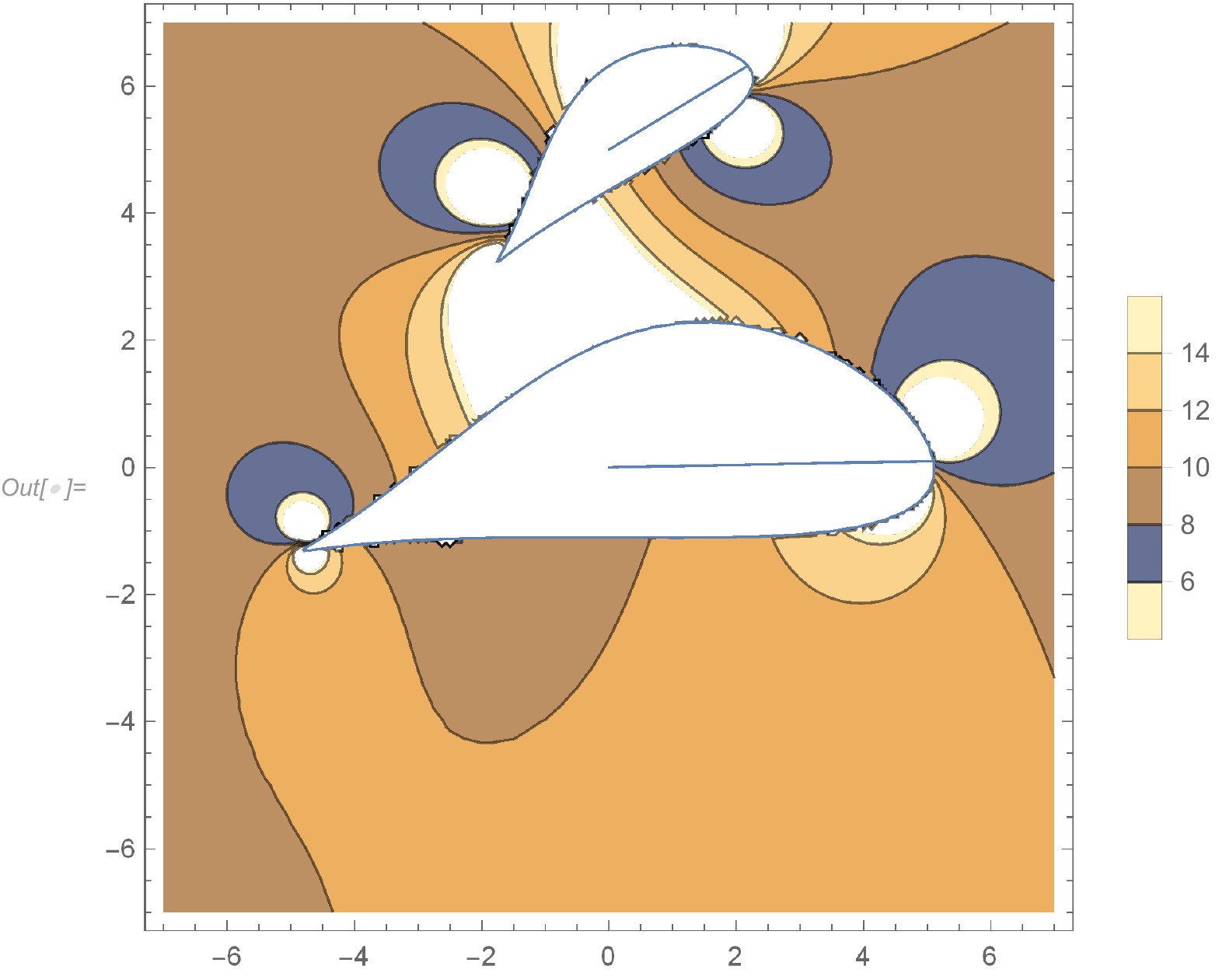}
\caption{Flow around two domains  with cusps.}
  \end{center}
  \end{figure}

\end{document}